\newtheorem{Theorem}{Theorem}[section]
\newtheorem{Lemma}[Theorem]{Lemma}
\theoremstyle{definition}
\newtheorem{Remark}{Remark}[section]
\newcommand{\BibTeX}{B\kern-.05em{\sc i\kern-.025em b}\kern-.08em\TeX}
\newcommand{\fullversion}[1]{}
\title{Fair Allocation with Money: \\What is Your Objective?}
\author{Noga Klein Elmalem, Rica Gonen, Erel Segal-Halevi}
\begin{document}
\maketitle

\begin{abstract}
When allocating indivisible items, there are various ways to use monetary transfers for eliminating envy. Particularly, one can apply a balanced vector of transfer payments, or charge each agent a positive amount, or --- contrarily --- give each agent a positive amount as a ``subsidy''.
In each model, one can aim to minimize the amount of payments used; this aim translates into different optimization objectives in each setting.
This note compares the various models, and the relations between upper and lower bounds for these objectives.
\end{abstract}

\section{Introduction}
Fair allocation of indivisible items is a hard problem, but it is made easy if monetary transfers are allowed. Consider for example a setting with one item and two agents. Without money there is no fair allocation, but with money there is a simple solution: give the item to the agent who assigns to it a higher value, and let this agent pay half his value to the other agent. 
For example, if Alice values the item at $20$ and George values it at $30$, then George receives the item and pays $15$ to Alice. Both agents believe that their bundle is at least as high as the other agent's bundle, that is, the allocation is \emph{envy-free}.

But even in this simple example, there are many other envy-free allocations. For example, George can transfer to Alice any amount  in $[10,15]$. 
Another option is to charge George any price in $[20,30]$, and donate it to charity rather than giving it to Alice. A third option is to compensate Alice by paying her a subsidy in $[20,30]$.
This raises the question of which, of these uncountably-many allocations, should we choose?

This note presents several plausible optimization goals, and compares the upper and lower bounds that can be derived in each model.

\section{Related Work}
The literature on fair allocation with money is very rich and diverse. Due to time constraints, we do not survey it in this first draft. We refer the reader to our recent paper \citep{elmalem2025whoever} for a detailed literature survey.

\section{Notation}
There is a set $N$ of $n$ agents. Each agent $i$ has a \emph{weight} $w_i$ denoting its entitlement (for example, it can denote the amount of shares agent $i$ holds in the asset to be divided). 
We denote $W := \sum_{i=1}^n w_i$.

There is a set $M$ of items. each agent $i$ has an additive valuation function $v_i$ over the items.
~
The goal is to compute an allocation $(A,\mathbf{p})$, such that
\begin{itemize}
    \item $A$ is a partition of the items among the agents; $A_i$ is the set of items assigned to agent $i$.
    \item $\mathbf{p}$ is a vector of monetary payments; $p_i$ is the amount paid to agent $i$.
\end{itemize}
The payment vector $\mathbf{p}$ is called \emph{balanced} if $\sum_{i=1}^n p_i = 0$.
It is called a \emph{subsidy} if $\min_{i=1}^n p_i \geq 0$. We often denote a subsidy vector by $\mathbf{s}$ instead of $\mathbf{p}$.

Given a subset $N'\subseteq N$, we denote $p_{N'} := \sum_{i\in N'}p_i$, and similarly $s_{N'} := \sum_{i\in N'}s_i$.

Each agent $i$ has a \emph{quasilinear} utility function $u_i$, defined by: $u_i(A,\mathbf{p}) = v_i(A_i) + p_i$.
Each agent expects to receive at least the same amount  of utility-per-share as any other agent. Formally, given an allocation $(A,\mathbf{p})$, we say that agent $i$ is \emph{weighted envy-free (WEF)} towards agent $j$ if 
\begin{align*}
    (v_i(A_i) + p_i) / w_i \geq (v_i(A_j) + p_j) / w_j.
\end{align*}
The allocation $(A,\mathbf{p})$ is called WEF if given that allocation, every agent $i$ is WEF towards any other agent $j$.

\section{Comparison between different objectives}
\label{sec:models}
The recent literature focuses on attaining fairness via \emph{subsidies} --- a vector $\mathbf{s}$ with $s_i\geq 0$ for all $i \in N$. 
Two objectives in this model are upper-bounding the subsidy per agent:
\begin{align}
\tag{P1}
& \text{Find an upper bound on } \displaystyle\max_{i\in N} s_i
\\
\notag
& \text{where } (A,\mathbf{s}) \text{ is a WEF allocation and } s_i\geq 0 \text{ for all } i\in N.
\end{align}
and upper-bounding the total subsidy:
\begin{align}
\tag{P2}
& \text{Find an upper bound on } \sum_{i\in N} s_i
\\
\notag
& \text{where } (A,\mathbf{s}) \text{ is a WEF allocation and } s_i\geq 0 \text{ for all } i\in N.
\end{align}
In real life, fairness is usually attained using \emph{balanced payments} --- a vector $\mathbf{p}$ with $\sum_{i\in N}p_i = 0$. 
As large negative payments may require agents to take a loan, which incurs additional costs, it may be desirable to upper-bound the largest negative payment:
\begin{align}
\tag{P3}
& \text{Find an upper bound on } \displaystyle\max_{i\in N} (-p_i)
\\
\notag
& \text{where } (A,\mathbf{p}) \text{ is a WEF allocation and } \sum_{i\in N} p_i = 0.
\end{align}
or the sum of negative payments:
\begin{align}
\tag{P4}
& \text{Find an upper bound on } \sum_{i\in N, p_i<0} -p_i
\\
\notag
& \text{where } (A,\mathbf{p}) \text{ is a WEF allocation and } \sum_{i\in N} p_i = 0.
\end{align}
Similarly, as large positive payments may be taxable, it may be desired to upper-bound the sum of positive payments; this is clearly equivalent to (P4), as in a balanced payment vector, the two sums are equal.%

In this section we prove formal relations between these four  problems. We use the following simple lemma:
\begin{Lemma}
\label{lem:slide}
Let $(A,\mathbf{s})$ be a WEF allocation.

(a) For any constant $z$, if $\mathbf{p}$ is defined by $p_i := s_i - z  w_i$, then $(A,\mathbf{p})$ is WEF too.

(b) If $\mathbf{p}$ is defined by $p_i := s_i - w_i  \frac{s_N}{W}$, then $(A,\mathbf{p})$ is both WEF and balanced.
\end{Lemma}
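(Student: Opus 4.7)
The plan is a direct algebraic verification, exploiting the fact that adding a weight-proportional shift to each payment cancels out when we divide by $w_i$ and $w_j$ on the two sides of the WEF inequality.

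For part (a), I would start from the hypothesis that $(A,\mathbf{s})$ is WEF, which gives $(v_i(A_i)+s_i)/w_i \geq (v_i(A_j)+s_j)/w_j$ for every pair $i,j$. Substituting $s_i = p_i + z w_i$ on the left and $s_j = p_j + z w_j$ on the right, the extra additive term $z$ appears on both sides and cancels, yielding exactly the WEF inequality for $(A,\mathbf{p})$. Since this holds for every pair, $(A,\mathbf{p})$ is WEF.

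For part (b), I would simply invoke (a) with the specific constant $z := s_N/W$, which immediately gives that $(A,\mathbf{p})$ is WEF. For balancedness, a one-line computation suffices:
\begin{align*}
\sum_{i\in N} p_i \;=\; \sum_{i\in N}\Bigl(s_i - w_i\,\tfrac{s_N}{W}\Bigr) \;=\; s_N - \tfrac{s_N}{W}\sum_{i\in N} w_i \;=\; s_N - \tfrac{s_N}{W}\cdot W \;=\; 0.
\end{align*}

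There is no real obstacle here; the only thing to be careful about is keeping the direction of the weighted-envy inequality straight when shifting by $z w_i$ versus $z w_j$, and confirming that the shift is indeed symmetric after division by the respective weights. This lemma is the technical bridge that will allow later results to convert any subsidy-based bound into a balanced-payment bound and vice versa, so its simplicity is a feature rather than a weakness.
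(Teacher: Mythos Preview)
Your proposal is correct and matches the paper's proof essentially line for line: part (a) is the same cancellation of the weight-proportional shift after dividing by $w_i$ and $w_j$ (the paper phrases it as ``removing the same utility per unit entitlement''), and part (b) is identical, invoking (a) with $z=s_N/W$ and summing to verify balance. You have simply written out the algebra a bit more explicitly than the paper does.
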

\begin{proof}
(a) Since we remove the same amount of ``utility per unit entitlement'' from all agents, the weighted envy levels remain the same.

(b) WEF follows from part (a) by setting $z := s_N/W$.
Balance follows from $\sum_{i\in N} p_i = \sum_{i\in N} (s_i - z w_i) = (\sum_{i\in N} s_i) - z (\sum_{i\in N} w_i)
= s_N - z W = 0$.
\end{proof}

We now show that an upper bound the subsidy per agent (P1) is the ``strongest'' bound, as it implies meaningful upper bound on each of the other three problems.

\subsection{Maximum subsidy vs. Total subsidy}

We first show that a bound for (P1) implies a bound for (P2).
\begin{Lemma}
\label{lem:p1-implies-p2}
Let $(A,\mathbf{s})$ be a WEF allocation with subsidy.

(a) If $s_i\leq T$ for all $i\in N$, then there exists a subsidy vector $\mathbf{s}'$ such that $(A,\mathbf{s}')$ is WEF and $\sum_{i\in N}s'_i\leq (n-1)T$. The factor $(n-1)$ is tight.

(b) If $s_i\leq w_i   T$ for all $i\in N$, then there exists a subsidy vector $\mathbf{s}'$ such that $(A,\mathbf{s}')$ is WEF and $\sum_{i\in N}s'_i\leq (W-w_{1})   T$.
The factor $(W-w_{1})$ is tight.
\end{Lemma}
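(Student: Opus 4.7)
The plan is to use the sliding operation from Lemma~\ref{lem:slide}(a) to eliminate one agent's subsidy while preserving WEF, so that the total is supported on only $n-1$ agents. Concretely, set $z^{*} := \min_{i\in N} s_i / w_i$, let $j$ be an agent attaining this minimum, and define $s'_i := s_i - z^{*} w_i$ for every $i$. Lemma~\ref{lem:slide}(a) immediately gives that $(A,\mathbf{s}')$ is WEF; the choice of $z^{*}$ gives $s'_i \geq 0$ for all $i$ and $s'_j = 0$; and since $z^{*}\geq 0$ we also have $s'_i\leq s_i$, so any per-agent upper bound on $\mathbf{s}$ is inherited by $\mathbf{s}'$.

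The two upper bounds then follow by a single summation. For (a), $s'_i\leq T$ for all $i$ together with $s'_j = 0$ give $\sum_{i\in N} s'_i \leq (n-1)T$. For (b), $s'_i\leq w_i T$ for all $i$ together with $s'_j=0$ give $\sum_{i\in N} s'_i \leq \sum_{i\neq j} w_i T = (W-w_j)T$, which is at most $(W-w_1)T$ under the convention $w_1 := \min_i w_i$.

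The main obstacle is the tightness claim. For (a), my plan is to take $n$ equally weighted agents who all value a single item at $T$: any WEF allocation assigns the item to one agent, and WEF forces every other agent to be compensated by at least $T$, giving total exactly $(n-1)T$. For (b), I would take $n$ agents with weights $w_1\leq w_2\leq\ldots\leq w_n$ all valuing a single item at $w_1 T$, allocated to the lightest agent; solving the WEF constraints shows that the unsubsidised agents must share a common ratio $s_i/w_i = T + s_1/w_1$, so setting $s_1$ to its minimum value $0$ forces $s_i = w_i T$ for each $i>1$ and total $(W-w_1)T$. The delicate step is checking that this $s_1=0$ is genuinely achievable (so that the per-agent hypothesis holds) and that no other WEF payment for the same allocation does strictly better; once that is established, the lower bound matches the upper bound and tightness is immediate.
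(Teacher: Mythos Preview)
Your proposal is correct and follows essentially the same route as the paper: the same sliding by $z^{*}=\min_i s_i/w_i$ to zero out one coordinate, the same summation bounds, and the same single-item tightness instances (equal weights for (a), item of value $w_1 T$ given to the lightest agent for (b)). Your tightness discussion for (b) is in fact a bit more careful than the paper's, since you explicitly verify that $s_1=0$ both satisfies the per-agent hypothesis $s_i\le w_i T$ and minimizes the total over all WEF subsidy vectors for that allocation.
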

\begin{proof}
Let $s_{min} := \displaystyle\min_{j\in N} (s_j/w_j)$,
and define $\mathbf{s'}$ by: $s'_i := s_i - s_{min}  w_i$.
Then $(A,\mathbf{s}')$ is still WEF by Lemma \ref{lem:slide}.
Moreover, by definition of minimum, $\mathbf{s}'$ is still a subsidy vector, and there is at least one $i$ for which $s'_i=0$. The inequalities on the $s_i$ remain valid for $s'_i$ --- since $s'_i\leq s_i$.

In (a), $\sum_{i\in N}s'_i$ is at most the sum of $n-1$ positive elements, each of which is at most $T$. So the sum is at most $(n-1)T$.

The factor $(n-1)$ is tight even with equal entitlements: suppose there is one item with value $T$ to all agents, which is given to agent $1$. In any WEF allocation, all other $n-1$ agents receive subsidy $T$. Therefore, the maximum subsidy per agent is $T$  and the sum of subsidies is $(n-1)T$.

In (b), $\sum_{i\in N}s'_i$ is at most $T$ times the sum of the $n-1$ largest elements in $\mathbf{w}$, which is $W - w_{1}$. 
So the sum is at most $(W-w_{1})T$.

The factor $(W-w_{1})T$ is tight for any entitlement vector.
Suppose there is one item with value $w_{1} T$, which is given to the agent with the smallest entitlement ($w_{1}$).
In any WEF allocation, every agent $2,\ldots, n$ must receive subsidy at least $w_i T$. So $s_i \leq w_i T$, whereas the sum of subsidies is $(W-w_{1})T$.

\end{proof}

\begin{Remark}
\label{lem:p2-does-not-imply-p1}
The opposite direction of Lemma \ref{lem:p1-implies-p2} does not hold even with equal entitlements. 
Suppose there are $n-1$ items with value $(n-1)T$, and consider the WEF allocation that gives one item to each agent $1,\ldots, n-1$, and a subsidy of $(n-1)T$ to agent $n$. The sum of subsidies is $(n-1)T$, but the maximum subsidy in any WEF allocation is at least $(n-1)T$, rather than $T$.
In this sense, (P1) is strictly stronger than (P2).
\end{Remark}

\subsection{Total subsidy vs. Maximum negative payment}
The next lemma shows that a bound for (P2) implies a bound for (P3) and vice-versa.
\begin{Lemma}
\label{lem:p2-implies-p3}
(a) Let $(A,\mathbf{s})$ be a WEF allocation with subsidy. 
If $\sum_{i\in N} s_i\leq T$, then there exists a balanced payment vector $\mathbf{p}$ such that $(A,\mathbf{p})$ is WEF and $(-p_i) \leq \frac{w_i}{W}  T$ for all $i\in N$.

(b) Conversely, let $(A,\mathbf{p})$ be a WEF allocation with a balanced payment vector. 
Let $T := \displaystyle \max_i \frac{W}{w_i}(-p_i)$, so that 
 $(-p_i) \leq \frac{w_i}{W} T$ for all $i\in N$. Then there exists a subsidy vector $\mathbf{s}$ such that $(A,\mathbf{s})$ is WEF and $\sum_{i\in N} s_i\leq T$.
\end{Lemma}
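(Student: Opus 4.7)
The plan is to derive both parts from Lemma \ref{lem:slide}, which guarantees that a constant additive shift of $z\cdot w_i$ preserves WEF. Both directions amount to choosing the right shift.

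For part (a), I would apply Lemma \ref{lem:slide}(b) directly to the given subsidy $\mathbf{s}$, producing the balanced vector $p_i = s_i - w_i \cdot \frac{s_N}{W}$. The allocation $(A,\mathbf{p})$ is WEF and balanced by that lemma. Then for each $i$,
\begin{align*}
-p_i \;=\; w_i\cdot\frac{s_N}{W} - s_i \;\leq\; w_i\cdot\frac{s_N}{W} \;\leq\; \frac{w_i}{W}\, T,
\end{align*}
using $s_i\geq 0$ and $s_N = \sum_j s_j \leq T$. This gives the required per-agent bound on $-p_i$.

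For part (b), I would go the other way by choosing the smallest shift that turns $\mathbf{p}$ into a nonnegative vector. Concretely, set $z := \max_{i\in N}(-p_i/w_i) = T/W$ (the equality is by the definition of $T$), and define $s_i := p_i + z\, w_i$. By Lemma \ref{lem:slide}(a) (with the sign flipped) $(A,\mathbf{s})$ is WEF, and by the choice of $z$ each $s_i \geq 0$, so $\mathbf{s}$ is a subsidy vector. Summing,
\begin{align*}
\sum_{i\in N} s_i \;=\; \sum_{i\in N} p_i + z\cdot\sum_{i\in N} w_i \;=\; 0 + \frac{T}{W}\cdot W \;=\; T,
\end{align*}
since $\mathbf{p}$ is balanced.

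There is no real obstacle here: both conversions are one-line applications of Lemma \ref{lem:slide} combined with the observation that the ``shift constant'' is exactly $s_N/W$ in one direction and $\max_i(-p_i/w_i)$ in the other. The only thing one has to be careful about is making the sign bookkeeping in part (b) consistent with the statement of Lemma \ref{lem:slide}(a), which is written with a subtraction $p_i = s_i - z w_i$ rather than an addition.
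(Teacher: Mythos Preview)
Your proof is correct and essentially identical to the paper's: both parts use the same shift from Lemma~\ref{lem:slide}, with $p_i = s_i - \frac{w_i}{W}s_N$ in (a) and $s_i = p_i + \frac{w_i}{W}T$ in (b), followed by the same one-line bounds. Your observation that $z = \max_i(-p_i/w_i) = T/W$ is just an explicit unpacking of the paper's choice of shift in part (b).
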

\begin{proof}
(a) Define $\mathbf{p}$ by: $p_i := s_i - \frac{w_i}{W}  s_N$. Then $(A,\mathbf{p})$ is WEF and balanced by Lemma \ref{lem:slide}.
Since $s_i\geq 0$ for all $i\in N$, $p_i \geq  - \frac{w_i}{W}  s_N$, which is at least $- \frac{w_i}{W}  T$ by the lemma assumption, so $(-p_i) \leq \frac{w_i}{W}  T$ for all $i\in N$.

(b) Define $\mathbf{s}$ by: $s_i := p_i + \frac{w_i}{W}  T$.
Then $(A,\mathbf{s})$ is WEF by Lemma \ref{lem:slide}, and $\mathbf{s}$ is a subsidy vector by the lemma assumption.
Since $\sum_{i\in N} p_i = 0$,
the new sum $\sum_{i\in N} s_i = \sum_{i\in N} \frac{w_i}{W}  T = T$.
\end{proof}

\subsection{Maximum subsidy vs. Maximum negative payment}
We can combine Lemmas \ref{lem:p1-implies-p2} and \ref{lem:p2-implies-p3} to get a direct implication from (P1) to (P3):
\begin{Lemma}
\label{lem:p1-implies-p3}
Let $(A,\mathbf{s})$ be a WEF allocation with subsidy.

(a) If $s_i\leq T$ for all $i\in N$, then there exists a balanced payment vector $\mathbf{p}$ such that $(A,\mathbf{p})$ is WEF and $(-p_i) \leq \frac{w_i}{W}(n-1)T$. 

(b) If $s_i\leq w_i   T$ for all $i\in N$, 
then there exists a balanced payment vector $\mathbf{p}$ such that $(A,\mathbf{p})$ is WEF and $(-p_i) \leq w_i(1-\frac{w_{1}}{W})   T$.
\end{Lemma}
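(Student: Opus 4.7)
The plan is to compose the two previous lemmas directly, as the statement itself suggests. The only thing to check is that the conclusion of Lemma \ref{lem:p1-implies-p2} feeds cleanly into the hypothesis of Lemma \ref{lem:p2-implies-p3}(a), and that the resulting bound simplifies to the form claimed.

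For part (a), I would start from a WEF subsidy allocation $(A,\mathbf{s})$ with $s_i\leq T$ for all $i$. Applying Lemma \ref{lem:p1-implies-p2}(a), I obtain a new subsidy vector $\mathbf{s}'$ such that $(A,\mathbf{s}')$ is still WEF and $\sum_{i\in N} s'_i \leq (n-1)T$. Now I invoke Lemma \ref{lem:p2-implies-p3}(a) on $(A,\mathbf{s}')$ with the value $(n-1)T$ playing the role of $T$ in that lemma. This produces a balanced payment vector $\mathbf{p}$ such that $(A,\mathbf{p})$ is WEF and $(-p_i)\leq \frac{w_i}{W}(n-1)T$ for all $i\in N$, which is exactly the desired inequality.

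For part (b), the argument is identical in shape: from $s_i\leq w_i T$ I apply Lemma \ref{lem:p1-implies-p2}(b) to get a WEF subsidy vector $\mathbf{s}'$ with $\sum_{i\in N} s'_i \leq (W-w_1)T$, and then Lemma \ref{lem:p2-implies-p3}(a) with bound $(W-w_1)T$ gives a balanced WEF payment vector $\mathbf{p}$ satisfying $(-p_i)\leq \frac{w_i}{W}(W-w_1)T$. Pulling out a factor of $w_i$ rewrites the right-hand side as $w_i\bigl(1-\tfrac{w_1}{W}\bigr)T$, matching the claim.

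There is essentially no obstacle; the whole proof is a two-line composition of previous results, plus the routine algebraic identity $\frac{W-w_1}{W} = 1 - \frac{w_1}{W}$ in part (b). The only thing worth being careful about is that the intermediate subsidy vector $\mathbf{s}'$ produced by Lemma \ref{lem:p1-implies-p2} may differ from the input $\mathbf{s}$, but since both target lemmas only require the existence of \emph{some} WEF vector with the stated property, this causes no issue.
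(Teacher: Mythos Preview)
Your proposal is correct and matches the paper's own approach exactly: the paper presents this lemma as an immediate consequence of combining Lemmas~\ref{lem:p1-implies-p2} and~\ref{lem:p2-implies-p3}, with no further argument given. Your write-up simply spells out that composition in detail.
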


\begin{Remark}
When the entitlements are all equal to $1$, the upper bound in both cases becomes  $(-p_i) \leq \frac{n-1}{n}T$. This bound is tight: if $n-1$ agents get a subsidy of $T$ and the $n$-th agent gets no subsidy, then in the balanced payment vector, agent $n$ will have to pay $\frac{n-1}{n}T$.
\end{Remark}

\begin{Remark}
The opposite direction of Lemma \ref{lem:p1-implies-p3} does not hold even with equal entitlements. 
In the example of Remark \ref{lem:p2-does-not-imply-p1},
in a balanced payment vector, agents $1,\ldots,n-1$ pay $\frac{n-1}{n}T$ and agent $n$ receives 
$\frac{(n-1)^2}{n}T$, so the maximum negative payent is 
$\frac{n-1}{n}T$, but the maximum subsidy in any WEF allocation is at least $(n-1)T$, rather than $T$.
So (P1) is strictly stronger than (P3).
\end{Remark}

\subsection{Maximum subsidy vs. Total positive payments}
We now show that a bound for (P1) implies a bound for (P4). The proof is slightly less trivial than the previous one, and currently available for the special case of equal entitlements.
\begin{Lemma}
\label{lem:p1-implies-p4}
Let $(A,\mathbf{s})$ be a WEF allocation with subsidy.

Suppose all entitlements are equal to $1$.
If $s_i\leq T$ for all $i\in N$, then there exists a balanced payment vector $\mathbf{p}$ such that $(A,\mathbf{p})$ is WEF and $\sum_{i\in N, p_i<0}-p_i\leq Tn/4$
(equivalently, $\sum_{i\in N, p_i>0}p_i\leq Tn/4$). 
The factor $Tn/4$ is tight.

\end{Lemma}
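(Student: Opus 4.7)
The plan is to apply Lemma~\ref{lem:slide}(b) with equal entitlements, which gives the balanced payment vector $p_i := s_i - s_N/n$ that is automatically WEF. What remains is to bound $\sum_{i\in N,\,p_i<0}(-p_i)$, which by balance equals $P := \sum_{i\in N,\,p_i>0} p_i$. So the task reduces to a purely numerical inequality: if $s_1,\ldots,s_n \in [0,T]$ and $\bar{s} := s_N/n$, then $\sum_{i:\,s_i>\bar{s}}(s_i-\bar{s}) \leq Tn/4$.

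Let $k := |\{i : s_i > \bar{s}\}|$. I would then establish \emph{two} complementary bounds on $P$ that use the two sides of the hypothesis $s_i\in[0,T]$ separately. First, from $s_i \leq T$, bounding each positive term by $T - \bar{s}$ gives $P \leq k(T-\bar{s})$. Second, from $s_i \geq 0$, bounding each negative term $\bar{s}-s_i$ by $\bar{s}$ and using balance $P = \sum_{i:\,s_i\leq \bar{s}}(\bar{s}-s_i)$ gives $P \leq (n-k)\bar{s}$. Therefore $P \leq \min\bigl(k(T-\bar{s}),\,(n-k)\bar{s}\bigr)$.

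Treating $\bar{s}$ as a free parameter, this minimum is maximized when the two expressions are equal, i.e.\ at $\bar{s} = kT/n$, yielding value $k(n-k)T/n$. Maximizing over $k\in\{0,\ldots,n\}$ (with the real relaxation achieving its max at $k=n/2$) gives $P \leq nT/4$, as required. I do not anticipate a genuine obstacle; the only non-obvious move is recognizing that both the upper and the lower bound on $s_i$ must be exploited simultaneously --- using either one alone gives a bound linear in $n$ with constant $1$ rather than $1/4$.

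For tightness (with $n$ even) I would exhibit an instance with $n/2$ identical items each worth $T$, allocated one to each of agents $n/2+1,\ldots,n$. Weighted envy-freeness forces the unique (up to shift) payment profile $p_i = T/2$ for $i\leq n/2$ and $p_i = -T/2$ otherwise, so $\sum_{p_i<0}(-p_i) = (n/2)(T/2) = Tn/4$; the corresponding subsidy vector $(s_i)=(T,\ldots,T,0,\ldots,0)$ satisfies $s_i\leq T$, matching the hypothesis.
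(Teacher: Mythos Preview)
Your proof is correct and follows essentially the same approach as the paper: construct $\mathbf{p}$ via Lemma~\ref{lem:slide}(b), derive the two complementary bounds $P\le k(T-\bar s)$ and $P\le (n-k)\bar s$ from $s_i\le T$ and $s_i\ge 0$ respectively, and optimize. The only cosmetic difference is in how the two bounds are combined: the paper takes a convex combination with weights $(T-\bar s)/T$ and $\bar s/T$ to eliminate $k$ in one stroke (obtaining $P\le n\bar s(T-\bar s)/T\le nT/4$), whereas you bound $\min\bigl(k(T-\bar s),(n-k)\bar s\bigr)$ by maximizing first over $\bar s$ and then over $k$; both routes arrive at $nT/4$. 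Your tightness example is also the same as the paper's up to relabeling which half of the agents receives the items.
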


\begin{proof}
Define $\mathbf{p}$ by: $p_i := s_i - \frac{w_i}{W}  s_N$. By Lemma \ref{lem:slide}, $(A,\mathbf{p})$ is WEF and balanced.

(a) Let $k$ be the number of agents with $p_i < 0$,
and denote $x := s_N/n$. Note that, when all entitlements are equal, $\frac{w_i}{W}  s_N = x$ for all $i\in N$.
So the negative payment of each such agent is at most $x$,
and the sum of negative payments is at most 
$k x$.

On the other hand, every positive payment is at most $(T-x)$, so the sum of 
positive payments is at most 
$(n-k)  (T-x)$.

The sum of positive payments and the sum of negative payments are equal, so there are two upper bounds on the same sum.
Every affine combination of two upper bounds is an upper bound too; we take the affine combination with coefficients $(T-X)/T$ and $x/T$
and get the upper bound 
\begin{align*}
&k x (T-x)/ T + (n-k) x (T-x)/T
\\
=&
n x (T-x)/ T = n   (x - x^2/T).
\end{align*}
We find the maximum of this expression by taking the derivative w.r.t. $x$: $n  (1 - 2 x / T) = 0$, which gives $x = T/2$ and an upper bound of 
$T n / 4$, as claimed.

The factor $T n/4$ is tight: suppose the entitlements are equal, and there are $n/2$ items of value $T$ to all $n$ agents. Consider the WEF allocation in which agents $1,\ldots, n/2$ get an item and agents $n/2+1,\ldots,n$ get a subsidy of $T$.
In any WEF allocation with a balanced payment vector, 
agents $1,\ldots, n/2$ pay $T/2$ and agents 
$n/2+1,\ldots,n$ receive a subsidy of $T/2$.
So the sum of negative payments is $(n/2)(T/2) = n T/4$.
\end{proof}

Extending Lemma \ref{lem:p1-implies-p4} to general entitlements remains an open problem.

\begin{Remark}
The opposite direction of Lemma \ref{lem:p1-implies-p4} does not hold even with equal entitlements. 

Suppose there are $n-1$ items with value $(n T / 4) \frac{n}{n-1}$, and consider the WEF allocation that gives each agent $1,\ldots, n-1$ one item and a negative payment of $(n T / 4)  \frac{1}{n-1}$,
and a positive payment of $n T / 4$ to agent $n$. The sum of positive payments is $n T/4$, 
but the maximum subsidy in any WEF allocation is at least $(n T / 4) \frac{n}{n-1}$, rather than $T$.
In this sense, (P1) is strictly stronger than (P4).
\end{Remark}

We do not know if there are any implications between (P2)=(P3) and (P4); this is another open problem.

\subsection{Conclusion}
When proving upper bounds, it is best to do so for (P1); but when proving lower bounds (-- impossibility results), it is best to do so for (P2)=(P3) and for (P4).

\subsection{Open question: Largest absolute payments}
With balanced payment vectors, upper-bounding the sum of positive payments is equivalent to upper-bounding the sum of negative payments, and both are equivalent to upper-bounding the sum of absolute payments $\sum_{i\in N} |p_i|$.

Upper-bounding the largest positive payment is \emph{not} equivalent to upper-bounding the largest negative payment, but the problems are symmetric and, probably, the same bounds can be derived for both. However, we do not know how to bound both largest negative payment and largest positive payment simultaneously, that is, the largest \emph{absolute} payment:
\begin{align}
\tag{P5}
& \text{Find an upper bound on } \sum_{i\in N} |p_i|
\\
\notag
& \text{where } (A,\mathbf{p}) \text{ is a WEF allocation and } \sum_{i\in N} p_i = 0.
\end{align}
The relations between (P5) and (P1),(P2),(P3),(P4) remain an open problem.

\bibliography{mybibfile}
\end{document}